\documentclass[10pt, letterpaper, twocolumn, conference]{IEEEtran}

\usepackage{graphicx}
\usepackage{subfigure}
\usepackage{amsmath}
\usepackage{amssymb}
\usepackage{cite}
\usepackage{cuted}
\usepackage{stfloats}
\usepackage{color}
\usepackage{flushend}
\usepackage{epstopdf}
\IEEEoverridecommandlockouts
\linespread{1} 
 
\raggedbottom

\newtheorem{theorem}{\bf Theorem}
\newtheorem{lemma}{\bf Lemma}

\newcommand{\m}[1]{\mathbf{#1}^m}
\newcommand{\lo}[1]{\log_2\left(#1\right)}

\newcommand{\co}[1]{\ensuremath{C}_{#1}}

\newcommand{\expect}[1]{\mathbb{E}\left[{#1}\right]}

\definecolor{DarkGreen2}{rgb}{0.00,0.6,0.08}

\settowidth{\leftmargini}{10pt}

\title{Is Witsenhausen's counterexample a relevant toy?}

\author{Pulkit Grover and Anant Sahai\\ Department of EECS, University of California at Berkeley, CA-94720, USA\\ \{pulkit, sahai\}@eecs.berkeley.edu}
\begin{document}\maketitle

\begin{abstract}
This paper answers a question raised by Doyle on the relevance of the Witsenhausen counterexample as a toy decentralized control problem. The question has two sides, the first of which focuses on the lack of an external channel in the counterexample. Using existing results, we argue that the core difficulty in the counterexample is retained even in the presence of such a channel. The second side questions the LQG formulation of the counterexample. We consider alternative formulations and show that the understanding developed for the LQG case guides the investigation for these other cases as well. Specifically, we consider 1) a variation on the original counterexample with general, but bounded, noise distributions, and 2) an adversarial extension with bounded disturbance and quadratic costs. For each of these formulations, we show that quantization-based nonlinear strategies outperform linear strategies by an arbitrarily large factor. Further, these nonlinear strategies also perform within a constant factor of the optimal, uniformly over all possible parameter choices (for fixed noise distributions in the Bayesian case).  

Fortuitously, the assumption of bounded noise results in a significant simplification of proofs as compared to those for the LQG formulation. Therefore, the results in this paper are also of pedagogical interest.
\end{abstract}
\section{Introduction}
Recently, we provided the first \textit{provably approximately optimal} solution to the Witsenhausen counterexample and its vector extensions~\cite{WitsenhausenJournal,FiniteLengthsWitsenhausen}. The solutions are obtained using techniques from information theory that help us understand the \textit{implicit communication} between the controllers: the ability of one controller to `talk' to the other by making changes to the state of the system. 
The counterexample was discussed quite a bit in the symposium on `Paths ahead in the science of information and decision systems' held in November 2009 at MIT LIDS in honor of Prof. Sanjoy Mitter. The ensuing discussions led Prof. John Doyle to question the relevance of the counterexample as a toy problem in decentralized control. The goal of this paper is to convince the reader of that relevance. 

It is hard to define what constitutes a useful and relevant toy problem. In order to obtain a better understanding of what such a problem could be, it is useful to look at the following toy problem from the neighboring field of  information theory: communicating a source across a power-constrained AWGN channel to minimize the average quadratic distortion in reconstructing the source. The problem is a toy because it caricatures the real world in three ways: communication problems today are never just point-to-point links, noise is rarely Gaussian, and a quadratic distortion cost is not the perceptually `correct' cost criterion for most sources~\cite{GrayPerceptual}. Even though these  assumptions make it a toy problem, it is a useful toy: it distills the problem of transmitting a source across a channel --- an aspect that is inherent in all practical problems --- in a minimalist fashion. A solution to this AWGN problem provided the foundation for system architectures (e.g. separation of source and channel coding) and coding techniques for larger communication problems (e.g. see~\cite{Tse2005}) including those with multiple transmitters and receivers, multiple antennas, non-Gaussian noise, etc. 

With this understanding, is Witsenhausen's counterexample a relevant toy? Similar to the point-to-point communication problem, the counterexample distills the possibility of implicit communication that appears to be ubiquitous in decentralized control systems. Why, then, may it not be relevant? Doyle's first argument rests on the work of Rotkowitz and Lall~\cite{RotkowitzLall}, which shows that with extremely fast, infinite-capacity, and perfectly reliable external channels, the optimal controllers are linear not just for the Witsenhausen counterexample (which is a simple observation), but for more general problems as well. Given that using an external channel is often a valid engineering option in decentralized control problems, Doyle argued that Witsenhausen's counterexample may be artificially hard because it does not allow the controllers to talk over an external channel, and instead forces the controllers to talk implicitly through the plant. The toy may be irrelevant: the architectural freedom of installing an external channel seemingly obviates any need for implicit communication. 

In practice, however, an external channel \textit{never} has infinite capacity or perfect reliability, which is what motivates a growing body of the control theory literature (for example~\cite{OurMainLQGPaper,MartinsBodeIntegral}) that addresses the issue of control over noisy and finite-capacity communication channels. In the presence of an imperfect external channel connecting the two controllers in Witsenhausen's counterexample, Martins~\cite{MartinsSideInfo} shows that while finding optimal solutions continues to be hard, one can design  signaling-based nonlinear strategies guided by those developed for the original counterexample. Martins also shows that in some cases, nonlinear strategies that do not even use the external channel can outperform linear strategies\footnote{
A similar problem is considered by Shoarinejad et al in~\cite{shoarinejad}, where noisy side information of the source is available at the receiver. Since the channel in formulation of~\cite{shoarinejad} is even more constrained than that in~\cite{MartinsSideInfo}, and nonlinear strategies outperform linear even without using the external channel for Martins's problem, they outperform linear for Shoarinejad's  problem as well.}. Provisioning for a very high SNR external channel, which has its own installation and operating costs, may therefore be unnecessary as long as nonlinear control techniques are used. In a companion paper~\cite{Allerton10Paper}, we consider this problem in greater detail and show that signaling-based nonlinear strategies can outperform linear ones by an arbitrarily large factor for any chosen finite-capacity external channel. 
We also derive approximately-optimal strategies which \textit{do} make use of the external channel\footnote{As is suggested by what David Tse calls the ``deterministic perspective" (along the lines of~\cite{DeterministicModel,SalmanThesis,DeterministicApproach}),  linear strategies do not make good use of the external channel because they only communicate the ``most significant bits'' --- which can be estimated reliably at the second controller anyway. So if the uncertainty in the initial state is large, the external channel is only of limited help and there remains a substantial advantage in having the controllers also talk through the plant.}, but even these results build on an understanding of the original counterexample, justifying its relevance as a toy problem. 

Doyle's second argument is about the relevance of the LQG framework in Witsenhausen's counterexample. Linearity is fine, but do we believe that primitive random variables are  Gaussian? Or that the designer is wedded to quadratic costs? The answer is no! Primitive random variables are almost never Gaussian, and the cost function is chosen more freely by the designer --- the quadratic case is only one amongst many possible formalizations of the intuition that the cost increases at an increasing rate. As suggested by Doyle, of interest here is the work of Rotkowitz~\cite{rotkowitz}. Rotkowitz shows that for the adversarial $L_2$-induced norm, as opposed to the original expected quadratic cost in Witsenhausen's formulation, linear control laws \textit{are} optimal and easy to find. At the same time, noise and initial state realizations can be completely arbitrary. Doyle's implicit argument, based on Rotkowitz's observation, is that because there is nothing sacred about the choice of a norm, viewed through the lens of a different (\textit{i.e.} induced) norm (and with fewer assumptions), Witsenhausen's problem does not require implicit communication!\footnote{It does not appear that this was Rotkowitz's original motivation. He was motivated because the idea was surprising enough that no one believed him~\cite{RotkowitzSlides}.} Indeed, with an induced norm, the problem seems no more intriguing than other team-theoretic problems with two controllers. 


The rest of this paper addresses this second argument. The induced-norm takes a frequentist's  approach and further assumes that \textit{nothing} is known about the state and noise values --- they can be completely arbitrary. The control strategy is therefore paranoid, and budgets for all possible values of state and noise, fearing for the worst. In the cost function, this is reflected as a maximization over the state and noise values. Because no assumptions are made on how large the noise and state values can be, maximization of an unnormalized quadratic cost would diverge to infinity for any control scheme. To prevent this, the maximization is performed over a quadratic function of state and noise that is \textit{normalized} with the size (a quadratic sum) of state and noise realizations. This  ``gain-perspective'' is commonly adopted in understanding input-output stability~\cite[Pg. 430]{khalil} of nonlinear systems. It characterizes how the norm of a signal changes as it passes through a system.


In practical engineering contexts, however, one often knows the ``typical'' values of state perturbations and noise realizations. The normalization in the gain perspective then does not reflect the actual costs incurred by the system. For instance, when the state perturbations and observation noises are small, the state estimates are more reliable, and therefore the control costs are often smaller. While a plain quadratic cost criterion reflects these smaller costs, the gain-perspective of induced-norm approach does not. 

In order to demonstrate our point, we look at the counterexample from both Bayesian and frequentist perspectives. To model the knowledge of ``typical'' values of primitive random variables, we assume merely that the \textit{noise} is bounded, and this bound is known. Our Bayesian model (Section~\ref{sec:bayesian}) is inspired from uniformly distributed noise. It considers an average quadratic cost assuming further that the distribution of the initial state is Gaussian, but departs from the LQG model in that the distribution of noise is bounded and known. Our frequentist model (Section~\ref{sec:frequentist}) goes a step further and considers a worst-case unnormalized quadratic cost assuming there is no prior distribution on the state and the noise. Yet, for both of these formulations, implicit communication can not be ignored. Quantization-based implicit-communication strategies can outperform linear strategies by an arbitrarily large factor\footnote{These results are based on similar results by Mitter and Sahai~\cite{AreaExamPaper} for the original counterexample.}, and these strategies also attain within a constant factor of the optimal cost. In the Bayesian case, the constant factor is reasonably small for uniform noise, as it was for the Gaussian case in~\cite{WitsenhausenJournal,FiniteLengthsWitsenhausen}, but it can be large for other distributions. When it is large, improved implicit-communication strategies will be needed in order to attain within a small constant factor.



Fortuitously, the proofs for bounded noise formulations considered in this paper are substantially simpler than those for the LQG formulation --- a finite-length analysis in the style of~\cite{FiniteLengthsWitsenhausen} is not needed to show approximate optimality\footnote{Even though a finite-length analysis is needed to obtain tighter bounds on the associated constant factors.}.

\vspace{-0.05in}

\section{Notation and problem statement}
\label{sec:probstat}
\begin{figure}[htb]
\begin{center}
  \includegraphics[scale=0.4]{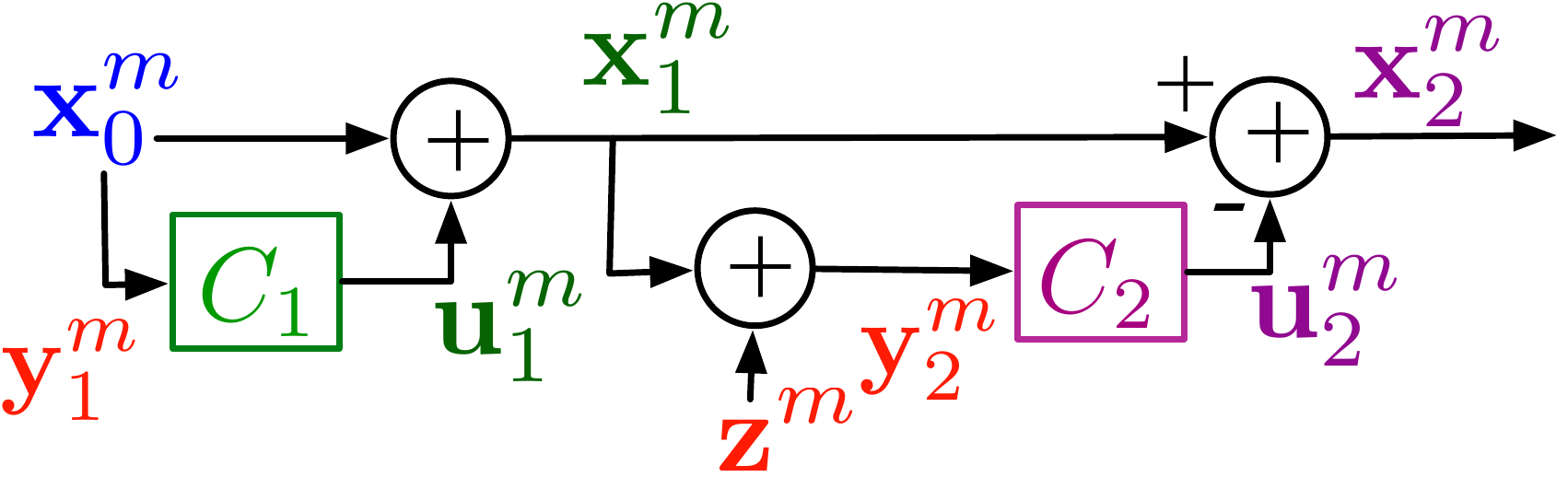}
\caption{Block-diagram for the vector Witsenhausen counterexample~\cite{WitsenhausenJournal}. }
\label{fig:infotheory}
\end{center}
\end{figure}

\vspace{-0.15in}

Vectors are denoted in bold, with the superscript to denote their length (e.g. $\m{x}$ is a vector of length $m$). Upper case is used for random variables or random vectors (except when denoting power $P$), while lower case symbols represent their realizations. Hats $(\,\widehat{\cdot{}}\,)$ on the top of random variables denote the estimates of the random variables. The block-diagram for the formulations considered in this paper is shown in Fig.~\ref{fig:infotheory}. 

A control strategy is denoted by $\gamma=(\gamma_1,\gamma_2)$, where $\gamma_i$ is the function that maps the observation $\m{y}_i$ at $\co{i}$ to the control input $\m{u}_i$. The observations are given by $\m{y}_1= \m{x}_0$ and $\m{y}_2=\m{x}_1 + \m{z}$, where $\m{z}$ is the disturbance, or the noise at the input of the second controller. For the first two formulations, the total cost is a quadratic function of the state and the input given by:
\begin{equation}
\label{eq:cost}
J^{(\gamma)}(\m{x}_0,\m{z}) = \frac{1}{m}k^2\|\m{u}_1\|^2+\frac{1}{m}\|\m{x}_2\|^2,
\end{equation}
where $\m{u}_1=\gamma_1(\m{x}_0)$, $\m{x}_2=\m{x}_0+\gamma_1(\m{x}_0)-\m{u}_2$ where $\m{u}_2 = \gamma_2(\m{x}_0+\gamma_1(\m{x}_0) + \m{z})$.  The cost expression includes a division by the vector-length $m$ to allow for natural comparisons between different vector-lengths. 

We now provide the two problem formulations that are addressed in this paper.

\vspace{-0.05in}

\subsection{Bayesian approach: a stochastic formulation}
\label{sec:stocstat}
The initial state $\m{X}_0$ is Gaussian, distributed $\mathcal{N}(0,\sigma_0^2\mathbb{I}_m)$, where $\mathbb{I}_m$ is the identity matrix of size $m\times m$. The observation noise $\m{Z}$ is distributed \textit{iid} according to distribution $f_Z(z)$ with finite differential entropy $h(Z)$, finite variance $\sigma_z^2$, and bounded support contained in $(-a,a)$. Without loss of generality, we assume that $\sigma_z^2=1$. For example, for a uniformly distributed $Z$, $\sigma_z^2=1$ for $a=\sqrt{3}$. 

The control objective is to minimize the expected quadratic cost $\overline{J}^{(\gamma)}$, 

\vspace{-0.2in}

\begin{equation}
\overline{J}^{(\gamma)} = \expect{J^{(\gamma)}} = \frac{1}{m}k^2\expect{\|\m{U}_1\|^2}+\frac{1}{m}\expect{\|\m{X}_2\|^2},
\end{equation}
over the choice of $\gamma$. The cost is averaged over the  random realizations of $\m{X}_0$ and $\m{Z}$. We use the variable $P:=\frac{1}{m}\expect{\|\m{U}_1\|^2}$ to denote the \textit{power} of the input $\m{u}_1$, and minimum mean-square error $MMSE = \frac{1}{m}\expect{\|\m{X}_2\|^2}=\frac{1}{m}\expect{\|\m{X}_1-\m{U}_2\|^2}$ to denote the second stage cost.

\subsection{Frequentist approach: an adversarial formulation with quadratic cost}
\label{sec:quadstat}
The block-diagram is the same as that for the stochastic problem. The total cost is still the same function given by~\eqref{eq:cost}, however, the cost for a strategy $\gamma$ is given by the maximum cost under the constraint\footnote{The bound of $\sqrt{3}$ is so chosen because it simplifies the derivations of upper and lower bounds. } that $|z_i|< \sqrt{3}$ for all $i$. That is,
\begin{equation}
J^{(\gamma)}_{freq} = \sup_{\m{x}_0,\|\m{z}\|_{\infty}< \sqrt{3}} J^{(\gamma)}(\m{x}_0,\m{z}).
\end{equation}

\vspace{-0.1in}

\section{Stochastic models for state and noise}
\label{sec:bayesian}

\vspace{-0.05in}

\subsection{Upper bound on costs}

\vspace{-0.05in}

\begin{theorem}
\label{thm:upper}
An upper bound on the optimal average costs, $\overline{J}_{opt}$, for the stochastic problem of Section~\ref{sec:stocstat} is given by
\begin{equation}
\overline{J}_{opt}\leq \min\left\{ k^2 a^2, \frac{\sigma_0^2}{\sigma_0^2+1}, k^2\sigma_0^2  \right\}.
\end{equation}
\end{theorem}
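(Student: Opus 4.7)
The right-hand side of the bound is a minimum of three quantities, so it suffices to exhibit, for each of the three terms, an admissible strategy $\gamma=(\gamma_1,\gamma_2)$ whose expected cost $\overline{J}^{(\gamma)}$ does not exceed that term. Since $\overline{J}_{opt}=\inf_\gamma \overline{J}^{(\gamma)}$, the theorem will follow by taking the minimum of the three achieved values. Two of the three strategies are classical linear baselines; the third is a scalar quantization strategy that crucially exploits the hard noise bound $|z_i|<a$.

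For the bound $k^2\sigma_0^2$, the plan is to use the full-cancellation strategy $\gamma_1(\m{x}_0)=-\m{x}_0$ and $\gamma_2(\cdot)=\mathbf{0}$: then $\m{x}_1=\mathbf{0}$, so $\m{x}_2=\m{x}_1-\mathbf{0}=\mathbf{0}$ and the MMSE term vanishes, while $P=\frac{1}{m}\expect{\|\m{X}_0\|^2}=\sigma_0^2$, giving $\overline{J}^{(\gamma)}=k^2\sigma_0^2$. For the bound $\sigma_0^2/(\sigma_0^2+1)$, I would use the do-nothing strategy $\gamma_1(\m{x}_0)=\mathbf{0}$ together with the componentwise linear MMSE estimator $\gamma_2(\m{y}_2)=\frac{\sigma_0^2}{\sigma_0^2+1}\m{y}_2$. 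Here $P=0$ and, using $\mathrm{Var}(X_{0,i})=\sigma_0^2$, $\mathrm{Var}(Z_i)=1$, and independence of $\m{X}_0$ and $\m{Z}$, the per-component mean-square error of this linear estimator evaluates to $\sigma_0^2-\frac{\sigma_0^4}{\sigma_0^2+1}=\frac{\sigma_0^2}{\sigma_0^2+1}$, so $\overline{J}^{(\gamma)}=\sigma_0^2/(\sigma_0^2+1)$. Both computations are standard and do not use the boundedness of $\m{Z}$ at all.

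For the remaining bound $k^2 a^2$, I would exhibit a separable scalar quantization scheme that \emph{does} use the boundedness of $\m{Z}$. Let $\gamma_1$ act componentwise by choosing $u_{1,i}$ so that $x_{1,i}=x_{0,i}+u_{1,i}$ is the point of the lattice $2a\mathbb{Z}$ closest to $x_{0,i}$; then $|u_{1,i}|\le a$ almost surely, and hence $P\le a^2$. At $\co{2}$, let $\gamma_2$ decode each component of $\m{y}_2$ to its nearest point of $2a\mathbb{Z}$: because $|z_i|<a$ while any two distinct lattice points are separated by exactly $2a$, this decoder returns $x_{1,i}$ exactly, so $\m{x}_2=\mathbf{0}$ and $MMSE=0$. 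Thus $\overline{J}^{(\gamma)}\le k^2 a^2$. Taking the minimum of the three achieved values proves the theorem. There is no real obstacle here; the only substantive step is the quantization argument, which reduces to the one-line observation that a spacing-$2a$ lattice is reliably decodable under any noise of magnitude strictly less than $a$.
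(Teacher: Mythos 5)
Your proposal is correct and matches the paper's own proof: the same three strategies (componentwise quantization to a spacing-$2a$ lattice giving $k^2a^2$, zero input with LLSE giving $\sigma_0^2/(\sigma_0^2+1)$, and full state cancellation giving $k^2\sigma_0^2$), combined by taking the minimum.
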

\begin{proof}
We consider the following three strategies 1) a scalar quantization strategy that quantizes the entire real line using uniform quantization-bins of size $2a$ in each dimension, 2) the zero-input strategy, followed by LLSE estimation at the second controller, and 3) the zero-forcing strategy. For a given $(k,\sigma_0)$-pair, the strategy with minimum cost is chosen. 

For the quantization strategy, the input forces the state to the nearest quantization point. The magnitude of the input is therefore bounded by $a$. Since the bins are disjoint, there are never any errors at the second controller (because the noise is smaller than $a$). The total cost is therefore upper bounded by $k^2a^2$. For zero-input strategy with Linear Least-Square Estimation (LLSE), the cost is the same as that in the Gaussian case zero-input strategy of~\cite{WitsenhausenJournal} of $\frac{\sigma_0^2}{\sigma_0^2+1}$ (because MMSE and LLSE operations are the same in the Gaussian formulation, and LLSE error depends on the distribution only through the variance of the random variable). For zero-forcing, the input is forced to zero, and thus the cost is $k^2\sigma_0^2$. This completes the proof. \end{proof}

\vspace{-0.05in}

\subsection{A lower bound on the costs}
\begin{theorem}
\label{thm:lowerbound}
A lower bound on the costs for the stochastic problem of Section~\ref{sec:stocstat} with observation noise $Z$ of variance $1$ and differential entropy $h(Z)$ is given by
\begin{equation}
\overline{J}_{opt}\geq \inf_{P\geq 0} k^2 P  + \left( \left(  \sqrt{\kappa(P)} -\sqrt{P}   \right)^+ \right)^2,
\end{equation}
where
\begin{equation}
\kappa (P) = \frac{\sigma_0^2 2^{2h(Z)}} {2\pi e \left( (\sigma_0 + \sqrt{P})^2+ 1 \right)}.
\end{equation}
\end{theorem}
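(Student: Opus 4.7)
Fix any admissible strategy $\gamma$ and write $P:=\frac{1}{m}\expect{\|\m{U}_1\|^2}$, so the first-stage cost equals $k^2P$ and it suffices to establish
\begin{equation*}
MMSE \;\geq\; \left(\left(\sqrt{\kappa(P)}-\sqrt{P}\right)^+\right)^2,
\end{equation*}
after which the infimum over admissible $P\geq 0$ yields the theorem. The plan has two pieces: (i) a Minkowski step that converts the MMSE for $\m{X}_1$ into an MMSE for $\m{X}_0$, and (ii) a Shannon entropy-power bound on $MMSE(\m{X}_0|\m{Y}_2)$. The detour through $\m{X}_0$ is essential: $\m{X}_1=\m{X}_0+\gamma_1(\m{X}_0)$ is a deterministic function of $\m{X}_0$, so $h(\m{X}_1)$ cannot be bounded from below in general, whereas $\m{X}_0$ is Gaussian with known entropy $\frac{m}{2}\lo{2\pi e \sigma_0^2}$.

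\emph{Step (i).} Writing $\m{X}_0-\m{U}_2=(\m{X}_1-\m{U}_2)-\m{U}_1$ and applying the $L^2$ triangle inequality gives
\begin{equation*}
\sqrt{\tfrac{1}{m}\expect{\|\m{X}_0-\m{U}_2\|^2}}\;\leq\;\sqrt{P}+\sqrt{MMSE}.
\end{equation*}
Because $\m{U}_2=\gamma_2(\m{Y}_2)$ is a (possibly suboptimal) estimator of $\m{X}_0$ from $\m{Y}_2$, the left-hand side is at least $\sqrt{MMSE(\m{X}_0|\m{Y}_2)}$. Whenever $\kappa(P)\geq P$, rearranging yields $\sqrt{MMSE}\geq\sqrt{\kappa(P)}-\sqrt{P}$; in the opposite case the claim is trivially absorbed by the $(\cdot)^+$.

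\emph{Step (ii).} Shannon's entropy-power MMSE lower bound yields
\begin{equation*}
MMSE(\m{X}_0|\m{Y}_2)\;\geq\;\frac{1}{2\pi e}\,2^{\frac{2}{m}h(\m{X}_0|\m{Y}_2)}.
\end{equation*}
Conditioned on $\m{X}_0$, the vector $\m{Y}_2=\m{X}_0+\gamma_1(\m{X}_0)+\m{Z}$ is a deterministic shift of $\m{Z}$, so $h(\m{Y}_2|\m{X}_0)=h(\m{Z})=m\,h(Z)$, and the chain rule gives
\begin{equation*}
h(\m{X}_0|\m{Y}_2)=h(\m{X}_0)+h(\m{Y}_2|\m{X}_0)-h(\m{Y}_2)=h(\m{X}_0)+m\,h(Z)-h(\m{Y}_2).
\end{equation*}
Bounding $h(\m{Y}_2)\leq\frac{m}{2}\lo{2\pi e\cdot\frac{1}{m}\expect{\|\m{Y}_2\|^2}}$ by Gaussian maximum entropy, and then bounding $\frac{1}{m}\expect{\|\m{Y}_2\|^2}\leq(\sigma_0+\sqrt{P})^2+1$ via Minkowski on $\m{X}_1=\m{X}_0+\m{U}_1$ together with the independence $\m{Z}\perp\m{X}_1$ (using $\mathbb{E}[Z]=0$, $\sigma_z^2=1$), substituting $h(\m{X}_0)=\frac{m}{2}\lo{2\pi e \sigma_0^2}$, and exponentiating reproduces exactly $MMSE(\m{X}_0|\m{Y}_2)\geq\kappa(P)$.

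\emph{Main obstacle.} The only nontrivial conceptual move is in Step (i): recognizing that the second-stage cost should be bounded through $\m{X}_0$ rather than $\m{X}_1$, with Minkowski absorbing the discrepancy at a cost of a single $\sqrt{P}$ term. Step (ii) is then a routine chain-rule calculation that sidesteps the Entropy Power Inequality altogether, since the Gaussian entropy of $\m{X}_0$ is known exactly. It is worth noting that the proof uses the observation noise only through $h(Z)$ and $\sigma_z^2$; the bounded-support hypothesis on $Z$ plays no role here, and enters only in the matching upper bound of Theorem~\ref{thm:upper}.
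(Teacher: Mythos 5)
Your proof is correct, and its first half coincides with the paper's: your Step (i) is exactly the paper's Lemma~\ref{lem:triangle} (the $L^2$ triangle inequality applied to $\m{X}_0$, $\m{X}_1$, $\m{U}_2$), used for the same purpose of trading the hard-to-control $MMSE$ for $\m{X}_1$ against the MMSE for the Gaussian $\m{X}_0$ at a cost of $\sqrt{P}$. Where you diverge is Step (ii). The paper bounds $\frac{1}{m}I(\m{X}_1;\m{Y}_2)$ (via a single-letterization with a time-sharing variable $Q$ in Appendix~\ref{app:mutinfo}) and then invokes a source--channel converse with the Gaussian distortion-rate function $D_{\sigma_0^2}(R)=\sigma_0^2 2^{-2R}$ to get $\frac{1}{m}\expect{\|\m{X}_0-\m{U}_2\|^2}\geq\kappa(P)$. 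You instead apply the conditional entropy-power estimation bound $MMSE(\m{X}_0|\m{Y}_2)\geq\frac{1}{2\pi e}2^{\frac{2}{m}h(\m{X}_0|\m{Y}_2)}$ and evaluate $h(\m{X}_0|\m{Y}_2)$ directly by the chain rule, using $h(\m{Y}_2|\m{X}_0)=m\,h(Z)$ and the Gaussian maximum-entropy bound on $h(\m{Y}_2)$; the arithmetic lands on the same $\kappa(P)$. The two are essentially the same information inequality unrolled differently: your version is more self-contained (it replaces the appendix's single-letterization and the black-boxed rate-distortion converse with a two-line entropy computation, at the price of quoting the vector entropy-power MMSE bound, which itself hides a Jensen step and $\sum_i h(X_i|\m{Y}_2)\geq h(\m{X}_0|\m{Y}_2)$), while the paper's version makes explicit the ``Gaussian source over a channel of bounded capacity'' interpretation that it reuses conceptually elsewhere. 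Your closing observations --- that the bounded-support assumption is never used (so the bound covers Gaussian noise too) and that the detour through $\m{X}_0$ is forced because $\m{X}_1$ is a deterministic function of $\m{X}_0$ --- both match remarks the paper makes.
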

\begin{proof}
The proof follows the lines of the proof of Theorem 3 in~\cite{WitsenhausenJournal}. For a fixed $P:=\frac{1}{m}\expect{\|\m{U}_1\|^2}$, we first obtain a lower bound on the $MMSE$. We need the following lemma~\cite[Lemma 3]{WitsenhausenJournal}.
\begin{lemma}
\label{lem:triangle}
For any three random vectors $A$, $B$ and $C$,
\begin{equation*}
\sqrt{\expect{\|B-C\|^2}}\geq \sqrt{\expect{\|A-C\|^2}} - \sqrt{\expect{\|A-B\|^2}}.
\end{equation*}
\end{lemma}
\begin{proof}
See~\cite{WitsenhausenJournal}.
\end{proof}
Substituting $\m{X}_0$ for $A$, $\m{X}_1$ for $B$, and $\m{U}_2$ for $C$ in Lemma~\ref{lem:triangle},
\begin{eqnarray}
\label{eq:sqrteqn}
\nonumber  && \sqrt{\expect{\|\m{X}_1-\m{U}_2\|^2}}\\
&& \geq  \sqrt{\expect{\|\m{X}_0-\m{U}_2\|^2}}-\sqrt{\expect{\|\m{X}_0-\m{X}_1\|^2}}.
\end{eqnarray}
We wish to lower bound  $\expect{\|\m{X}_1-\m{U}_2\|}$. The second term on the RHS is smaller than $\sqrt{mP}$. Therefore, it suffices to lower bound the first term on the RHS of~\eqref{eq:sqrteqn}. If we interpret $\m{U}_2$ as an estimate for $\m{X}_0$, this term represents the MMSE in reconstruction of $\m{X}_0$ across the $X_1-Y_2$ channel. 
\begin{lemma}
\label{lem:mutinfo}
The mutual information $I(\m{X}_1;\m{Y}_2)$ is bounded as follows

\vspace{-0.25in}

\begin{equation}
\frac{1}{m}I(\m{X}_1;\m{Y}_2)\leq \frac{1}{2}\lo{\frac{2\pi e \left( (\sigma_0+\sqrt{P})^2+1\right)}{2^{2h(Z)}}}.
\end{equation}
\end{lemma}
\begin{proof}
See Appendix~\ref{app:mutinfo}.
\end{proof}
We can now obtain a lower bound on the MMSE in reconstructing $\m{X}_0$ as follows: $\m{X}_0$ is a Gaussian source that is reconstructed across a channel of mutual information (and hence also the capacity) upper bounded by the expression in~\eqref{eq:mutinfo}. The MMSE in reconstructing $\m{X}_0$ is therefore lower bounded by $mD_{\sigma_0^2}(C_{X_1-Y_2})$ where $D_{\sigma_0^2}(R):=\sigma_0^2 2^{-2R}$ is the distortion-rate function~\cite[Ch. 13]{CoverThomas} of a Gaussian source, and $C_{X_1-Y_2}$ is the capacity across the $X_1-Y_2$ channel.

Thus, the MMSE in reconstructing $\m{X}_0$ is lower bounded by
\begin{eqnarray}
\label{eq:lasteqn}
\nonumber \frac{1}{m}\expect{\|\m{X}_0-\m{U}_2\|^2} &\geq & D_{\sigma_0^2}(C_{X_1-Y_2})\\
&\geq &\frac{\sigma_0^2 2^{2h(Z)}} {2\pi e\left( (\sigma_0+\sqrt{P})^2+1\right)}.
\end{eqnarray}

\vspace{-0.05in}

A lower bound on the $MMSE$ follows from~\eqref{eq:sqrteqn} and~\eqref{eq:lasteqn}. The theorem follows from the minimizing the sum of $k^2P$ and $MMSE$ over non-negative values of $P$.
\end{proof}
Observe that the proof does not make use of the bounded nature of the noise. The theorem is thus  applicable to Gaussian noise as well, and is therefore a generalization of the lower bound in~\cite{WitsenhausenJournal}.

\vspace{-0.05in}

\subsection{Quantization-based strategies are approximately optimal}
We now show that the upper bound in Theorem~\ref{thm:upper} is within a constant factor of the lower bound in Theorem~\ref{thm:lowerbound}.
\begin{theorem}
For the problem as stated in Section~\ref{sec:stocstat},

\vspace{-0.2in}

\begin{eqnarray*}
&&\inf_{P\geq 0} k^2 P  + \left( \left(  \sqrt{\kappa(P)} -\sqrt{P}   \right)^+ \right)^2 \leq \overline{J}_{opt}\\
&& \leq \mu\left(\inf_{P\geq 0} k^2 P  + \left( \left(  \sqrt{\kappa(P)} -\sqrt{P}   \right)^+ \right)^2\right),
\end{eqnarray*}
where $\mu \leq \frac{200a^2}{2^{2h(Z)}}$, and the upper bound is achieved by quantization-based strategies, complemented by linear strategies. For example, for $Z\sim \mathbb{U}(-\sqrt{3},\sqrt{3})$, the uniform distribution of variance $1$, $\mu\leq 50$.
\end{theorem}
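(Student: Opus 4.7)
The left inequality is immediate from Theorem~\ref{thm:lowerbound}. For the right inequality, Theorem~\ref{thm:upper} gives $\overline{J}_{opt}\le U := \min\{k^2a^2,\,\sigma_0^2/(\sigma_0^2+1),\,k^2\sigma_0^2\}$, so it suffices to show $U\le \mu L$, where $L := \inf_{P\ge 0} g(P)$ with $g(P) := k^2 P + ((\sqrt{\kappa(P)}-\sqrt{P})^+)^2$. Since $L$ is an infimum, the work is not in bounding it from above but rather in obtaining a uniform-in-$P$ lower bound matching $U$ up to a constant factor.

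My plan is to partition the $(k,\sigma_0)$-plane into three regimes corresponding to which of the three candidate strategies attains $U$: the zero-forcing regime ($\sigma_0\le a$ and $k^2(\sigma_0^2+1)\le 1$), the quantization regime ($\sigma_0\ge a$ and $k^2a^2(\sigma_0^2+1)\le\sigma_0^2$), and the zero-input regime, which is the complement. In each regime I would establish $g(P)\ge U/\mu$ uniformly in $P\ge 0$ by the following template: choose a threshold $P_0$ proportional to $U/(\mu k^2)$; for $P\ge P_0$, the term $k^2P$ of $g$ already yields the bound; for $P<P_0$, the monotone decrease of $\kappa$ in $P$ allows substituting $P=P_0$ inside $(\sqrt{\kappa(P)}-\sqrt{P})^+$, after which the regime's defining inequalities constrain the resulting expression in terms of the single number $c' := 2^{2h(Z)}/(2\pi e)$.

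The main obstacle will be the quantization regime, where $\sigma_0$ can be arbitrarily large compared to $a$ and the factor $\sigma_0^2/((\sigma_0+\sqrt{P})^2+1)$ in $\kappa$ must be handled carefully. With the choice $P_0\approx a^2/\mu$ and the regime constraint $\sigma_0\ge a$, the denominator of $\kappa(P_0)$ is close to $\sigma_0^2$, so $\kappa(P_0)\gtrsim c'$ up to an $(1+1/\sqrt{\mu})^2$ overhead from expanding $(\sigma_0+\sqrt{P_0})^2$. Combined with the regime bound $k^2a^2\le\sigma_0^2/(\sigma_0^2+1)\le 1$, this forces $g(P)\gtrsim c' a^2$ modulo constants, hence $U/g(P)\lesssim a^2/c' = 2\pi e\cdot a^2/2^{2h(Z)}$. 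Tracking the numerical constants --- $2\pi e\approx 17$, the slack from threshold-tuning, and analogous but easier book-keeping in the zero-forcing and zero-input regimes --- yields the clean uniform bound $\mu\le 200a^2/2^{2h(Z)}$. For the uniform-noise specialization $Z\sim\mathbb{U}(-\sqrt{3},\sqrt{3})$, substituting $2^{2h(Z)}=(2\sqrt{3})^2=12$ and $a^2=3$ gives $\mu\le 50$.
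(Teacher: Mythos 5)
Your proposal is correct and follows essentially the same route as the paper's proof: both combine Theorems~\ref{thm:upper} and~\ref{thm:lowerbound}, split on whether the power is above or below a threshold of order $2^{2h(Z)}\sigma_0^2/200$ (using the $k^2P$ term above it and the monotonicity of $\kappa$ plus $2^{2h(Z)}\le 2\pi e$ below it), and compare against whichever of the three closed-form upper bounds is relevant. The only difference is organizational --- you partition by which strategy attains the minimum and argue uniformly in $P$, whereas the paper splits on $\sigma_0^2\lessgtr 1$ and works at the optimizer $P^*$ --- and I checked that your constants (in particular the choice $P_0=U/(\mu k^2)$ with $\mu=200a^2/2^{2h(Z)}$) do close in all three regimes.
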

\begin{proof}
The proof is along the lines of proof of Theorem 1 of~\cite{WitsenhausenJournal}. We use $P^*$ to denote the optimizing value of $P$ in the lower bound. We consider two cases:

\textit{Case 1:}  $\sigma_0^2 <1$.

If $P^*>\frac{\sigma_0^2 2^{2h(Z)}}{200}$, using zero-forcing strategy, we have an upper bound of $k^2\sigma_0^2$. The lower bound is larger than $k^2P^*$ which in this case is larger than $k^2\frac{\sigma_0^2 2^{2h(Z)}}{200}$. The ratio is thus smaller than $\frac{200}{2^{2h(Z)}}$. 

If $P^*\leq \frac{\sigma_0^2 2^{2h(Z)}}{200}$, 
\begin{eqnarray*}
\kappa(P) &= &\frac{\sigma_0^2 2^{2h(Z)}}{2\pi e  \left( (\sigma_0+\sqrt{P^*})^2  + 1 \right)}\\
&\overset{\sigma_0^2\leq 1, P^*\leq \frac{\sigma_0^2 2^{2h(Z)}}{200}}{\geq} & \frac{\sigma_0^2 2^{2h(Z)}}{2\pi e  \left( (1+\sqrt{\frac{ 2^{2h(Z)}}{200}})^2  + 1 \right)}\\
&\overset{(a)}{\geq} & \frac{\sigma_0^2 2^{2h(Z)}}{2\pi e  \left( (1+\sqrt{\frac{ \pi e}{100}})^2  + 1 \right)} \\
&\approx & \frac{\sigma_0^22^{2h(Z)}}{41.95} > \frac{\sigma_0^22^{2h(Z)}}{42},
\end{eqnarray*}
where $(a)$ follows from the fact that $h(Z)\leq \frac{1}{2}\lo{2\pi e}$, the differential entropy for the $\mathcal{N}(0,1)$ random variable (Gaussian distribution maximizes the differential entropy for a given variance). Thus,
\begin{eqnarray*}
\left(\left(\kappa - \sqrt{P^*}\right)^+\right)^2&\geq & \sigma_0^2 2^{2h(Z)} \left(\frac{1}{\sqrt{42} } - \frac{1}{\sqrt{200}}\right)^2 \\
& \approx & \frac{\sigma_0^22^{2h(Z)}}{143.11}> \frac{\sigma_0^22^{2h(Z)}}{145},
\end{eqnarray*}
which is also a lower bound on the total cost. Using the zero-input upper bound of $\frac{\sigma_0^2}{\sigma_0^2+1}<\sigma_0^2$, the ratio in this case is upper bounded by $\max\left\{\frac{200}{2^{2h(Z)}},\frac{145}{2^{2h(Z)}}\right\}$.

\textit{Case 2:} $\sigma_0^2\geq 1$.

If $P^*> \frac{2^{2h(Z)}}{200}$, using the upper bound of $k^2 a^2$, the ratio of upper and lower bounds is smaller than $\frac{k^2 a^2}{k^2\frac{2^{2h(Z)}}{200} } = \frac{200 a^2}{2^{2h(Z)}}$. 

If $P^*\leq \frac{2^{2h(Z)}}{200}\leq \frac{2\pi e}{200}$ (again, because Gaussian distribution maximizes the differential entropy for given variance),
\begin{eqnarray*}
\kappa(P) &\overset{(a)}{=} &\frac{\sigma_0^2 2^{2h(Z)}}{2\pi e  \left( (\sigma_0+\sqrt{P^*})^2  + 1 \right)}\\
&\overset{(b)}{\geq} & \frac{2^{2h(Z)}}{2\pi e  \left( (1+\sqrt{P^*})^2  + 1 \right)}\\
&\geq & \frac{2^{2h(Z)}}{2\pi e  \left( (1+\frac{\pi e}{100})^2  + 1 \right)}\geq \frac{2^{2h(Z)}}{42},
\end{eqnarray*}
where $(b)$ holds because the expression in the RHS of $(a)$ is an increasing function of $\sigma_0$. Thus, the following lower bound holds for the MMSE error
\begin{eqnarray*}
MMSE\geq 2^{2h(Z)}\left(\frac{1}{\sqrt{42}}  -\frac{1}{\sqrt{200}}\right)^2\geq  \frac{2^{2h(Z)}}{145}.
\end{eqnarray*}
Using the zero-input upper bound, the ratio is smaller than $\frac{145}{2^{2h(Z)}}$. The ratio in this case is therefore smaller than $\max\left\{\frac{200a^2}{2^{2h(Z)}},\frac{145}{2^{2h(Z)}}\right\}\leq \frac{200a^2}{2^{2h(Z)}}$. The result now follows from the observation that $a\geq 1$, since the variance of $Z$ is $1$.
\end{proof}
Note that the result is not asymptotic --- the constant factor is uniform over all vector lengths, though it can be improved using lattice-based strategies of~\cite{FiniteLengthsWitsenhausen} for upper bound, and sphere-packing bounds~\cite{FiniteLengthsWitsenhausen} for lower bound.

\textit{Remark:} For the original counterexample, our results in~\cite{FiniteLengthsWitsenhausen} provide a constant factor that is uniform over the problem parameters $(k,\sigma_0^2)$. The constant factor of $\frac{200a^2}{2^{2h(Z)}}$ here depends on $a$ and $h(Z)$, and is therefore uniform over all $(k,\sigma_0^2)$ but only for a fixed noise distribution (and hence fixed $a$ and $h(Z)$). It blows up when the noise distribution has a long tail, or has a hugely negative differential entropy. In such cases, greater care is required in the design of implicit communication strategies. For instance, if the distribution is long-tailed, the quantization points need not be separated by $a$, but they can instead be separated by a distance sufficiently large so that the probability of mistaking one quantization point for another at the second controller is low. This insight is used in~\cite{FiniteLengthsWitsenhausen} to obtain strategies for the Gaussian case. 


\subsection{Quantization-based strategies outperform linear strategies by an unbounded factor}
\label{sec:arbitrary}
Consider the scalar case. A linear constraint on the second controller forces it to perform an LLSE estimation on the output $Y_2$ in order to estimate $X_1$. The first controller, also linear, uses an input $U_1=\alpha X_0$. The resulting state $X_1=(1+\alpha)X_0$ has variance $\widetilde{\sigma}_0^2 = \sigma_0^2(1+\alpha)^2$. The mean-squared estimation error is, therefore, $\frac{\widetilde{\sigma}_0^2}{\widetilde{\sigma}_0^2+1}$. Since this is an increasing function of $\widetilde{\sigma}_0^2$, the optimizing $\alpha$ is negative. Since $\alpha^2\sigma_0^2=P$, $\alpha\sigma_0=-\sqrt{P}$. The total cost for the optimal linear strategy is 
\begin{equation}
\overline{J}_{lin} = k^2 P + \frac{\left(\left(\sigma_0-\sqrt{P}\right)^+\right)^2}{\left(\left(\sigma_0-\sqrt{P}\right)^+\right)^2+1}.
\end{equation} 
Clearly, this cost remains the same in the vector case as well. 

We now consider two cases. If $P<\frac{\sigma_0^2}{4}$, 
\begin{eqnarray*}
\overline{J}_{lin}  \geq  \frac{\left(\left(\sigma_0-\sqrt{P}\right)^+\right)^2}{\left(\left(\sigma_0-\sqrt{P}\right)^+\right)^2+1}
 \overset{(a)}{\geq}  \frac{\sigma_0^2}{\sigma_0^2+4},
\end{eqnarray*}
where $(a)$ follows from the fact that $P<\frac{\sigma_0^2}{4}$. In the limit of $\sigma_0^2\rightarrow\infty$ and $k\rightarrow 0$, this lower bound increases to $1$, whereas the quantization upper bound of $k^2a^2$ decreases to zero. 

Alternatively, if $P\geq\frac{\sigma_0^2}{4}$, 
\begin{eqnarray}
\overline{J}_{lin}\geq k^2P\geq \frac{k^2\sigma_0^2}{4}.
\end{eqnarray} 
Thus, the ratio of the costs attained by the optimal linear strategy and those attained by the quantization upper bound is larger than $\frac{k^2\frac{\sigma_0^2}{4}}{k^2a^2}=\frac{\sigma_0^2}{4a^2}$ which diverges to infinity as $k\rightarrow 0,\;\sigma_0^2\rightarrow\infty$. 

\vspace{-0.05in}

\section{Adversarial model for noise and state}
\label{sec:frequentist}
\begin{theorem}
The optimal cost $J_{opt,freq}$ for adversarially modeled initial state and (bounded) noise $Z\in (-\sqrt{3},\sqrt{3})$ with quadratic costs (as defined in Section~\ref{sec:quadstat}) is bounded as follows  

\vspace{-0.3in}

\begin{eqnarray*}
&&\inf_{P\geq 0} k^2 P  + \left( \left(   \sqrt{\frac{ 6} {\pi e }} -\sqrt{P}   \right)^+ \right)^2\leq J_{opt,freq}\\
&&\leq 2\pi e\left(\inf_{P\geq 0} k^2 P  + \left( \left(   \sqrt{\frac{ 6} {\pi e }} -\sqrt{P}   \right)^+ \right)^2\right),
\end{eqnarray*} 
where the upper bound is achieved using quantization-based strategies complemented by linear strategies. Further, in the regime of $k\rightarrow 0$, the ratio of the costs attained by the best linear strategy to that attained by appropriate quantization-based nonlinear strategies diverges to infinity.
\end{theorem}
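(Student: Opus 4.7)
The theorem decomposes into three pieces: an upper bound from two explicit strategies, a matching lower bound obtained by reducing the frequentist problem to the Bayesian bound of Theorem~\ref{thm:lowerbound} via a worst-case prior, and an unbounded linear-vs.-nonlinear gap as $k\to 0$. I expect the main obstacle to be isolating the right Bayesian prior so that the generic constant $\kappa(P)$ from Theorem~\ref{thm:lowerbound} collapses to the frequentist constant $6/(\pi e)$; once the prior is fixed, the rest is bookkeeping.

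For the upper bound I would combine two strategies and take the pointwise minimum of their costs. The \emph{quantization} strategy tiles each coordinate with bins of width $2\sqrt{3}$ and forces each component of $\mathbf{x}_0$ to the nearest bin center, so $|u_{1,i}|\le\sqrt{3}$ (giving first-stage cost $\le 3k^2$); because $|z_i|<\sqrt{3}$, the noisy state $\mathbf{x}_1+\mathbf{z}$ stays inside the intended bin, the second controller decodes it exactly, and $\mathbf{x}_2=\mathbf{0}$. The \emph{linear pass-through} $\mathbf{u}_1=\mathbf{0}$, $\mathbf{u}_2=\mathbf{y}_2$ produces $\mathbf{x}_2=-\mathbf{z}$ and hence worst-case cost $\le 3$. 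The achievable cost is therefore at most $\min(3k^2,3)$. A short calculation gives $\inf_{P\ge 0}[k^2 P+((\sqrt{6/(\pi e)}-\sqrt{P})^+)^2]=(6/(\pi e))\,k^2/(k^2+1)$, so $2\pi e$ times this equals $12k^2/(k^2+1)$, and one checks $\min(3k^2,3)\le 12k^2/(k^2+1)$ in both regimes $k^2\le 3$ and $k^2>3$.

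For the lower bound I would use the standard sup-dominates-expectation inequality: $J^{(\gamma)}_{freq}\ge\overline{J}^{(\gamma)}$ for any joint distribution on $(\mathbf{X}_0,\mathbf{Z})$ supported on the adversarial set, so $J_{opt,freq}\ge\overline{J}_{opt}$ for any such Bayesian model. Taking $\mathbf{X}_0\sim\mathcal{N}(0,\sigma_0^2\mathbb{I}_m)$ and $Z\sim\mathcal{U}(-\sqrt{3},\sqrt{3})$ gives $\sigma_z^2=1$ and $2^{2h(Z)}=12$, so the quantity $\kappa(P)$ of Theorem~\ref{thm:lowerbound} becomes $12\sigma_0^2/[2\pi e((\sigma_0+\sqrt{P})^2+1)]$, which tends to $6/(\pi e)$ as $\sigma_0\to\infty$ for every fixed $P$. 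Passing to this limit inside the infimum yields the claimed lower bound $\inf_P k^2 P+((\sqrt{6/(\pi e)}-\sqrt{P})^+)^2$.

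For the linear-vs-nonlinear gap I would examine affine strategies $u_1=\alpha x_0$, $u_2=\beta(x_1+z)+c$ in the scalar case (the vector case decomposes componentwise), yielding $x_2=(1-\beta)(1+\alpha)x_0-\beta z-c$. Since the sup is over unbounded $x_0$, any $\alpha\neq 0$ makes the first-stage cost $k^2\alpha^2 x_0^2$ diverge (forcing $\alpha=0$); the residual coefficient $(1-\beta)$ on $x_0$ must then also vanish (forcing $\beta=1$); and $c=0$ minimizes the remaining sup over $z$. The unique finite-cost linear strategy is therefore the pass-through, with worst-case cost exactly $3$, while the quantization strategy above has cost $\le 3k^2$; their ratio $\ge 1/k^2$ diverges as $k\to 0$.
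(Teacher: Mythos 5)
Your proof is correct, and its overall architecture (the two achievable strategies giving $\min(3k^2,3)$, and the reduction of the frequentist lower bound to Theorem~\ref{thm:lowerbound} via a Gaussian/uniform prior supported on the adversarial set with $\sigma_0\to\infty$) matches the paper's. Two sub-arguments are genuinely different, and in both cases yours is arguably cleaner. First, to verify the factor $2\pi e$, the paper splits on whether the optimizing $P^*$ is above or below $\frac{6}{4\pi e}$ and bounds the ratio in each case; you instead evaluate the infimum in closed form as $\frac{6}{\pi e}\cdot\frac{k^2}{k^2+1}$ and check $\min(3k^2,3)\le 12k^2/(k^2+1)$ directly, which is an exact and self-contained computation. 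Second, for the unbounded linear-vs-nonlinear gap, the paper again routes through the Bayesian model: it lower-bounds the average cost of linear strategies by $\inf_P k^2P+\frac{((\sigma_0-\sqrt{P})^+)^2}{((\sigma_0-\sqrt{P})^+)^2+1}$, invokes the probabilistic method, and then reuses the case analysis of Section~\ref{sec:arbitrary} with $a=\sqrt{3}$. You exploit the worst-case setting directly: the sup over unbounded $x_0$ forces $\alpha=0$ and $\beta=1$, so the only finite-cost affine law is the pass-through with cost exactly $3$, giving the ratio $1/k^2$. This is simpler and gives the exact best linear cost rather than an asymptotic lower bound; your parenthetical that the vector case ``decomposes componentwise'' is slightly loose (a general linear map need not be diagonal), but the same argument ($A=0$, $B=\mathbb{I}$ forced by unboundedness) goes through verbatim for matrices. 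Both your write-up and the paper's are equally informal about interchanging $\lim_{\sigma_0\to\infty}$ with $\inf_P$ in the lower bound; this is harmless since the infimum is effectively over the compact set $P\in[0,6/(\pi e)]$ on which $\kappa(P)\to 6/(\pi e)$ uniformly and monotonically, but it deserves a sentence if written out fully.
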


\begin{proof}
\textit{Upper bound:} If $k^2\leq 1$, we use a uniform quantization strategy  with bin size $2\sqrt{3}$. Since the noise amplitude is smaller than $\sqrt{3}$, there are no errors at the second controller. The cost of this strategy is therefore $3k^2$, which is attained in the event when the initial state is exactly at the edge of one of the quantization bins. 

If $k^2>1$, we use the zero-input strategy --- the first controller inputs zero, and the second controller chooses $\m{U}_2=\m{Y}_2$ as the estimate of $\m{X}_1$. Since noise amplitude is bounded by $\sqrt{3}$, the normalized error for this strategy is bounded by $3$.

The upper bound is therefore given by $\min\{3k^2,3\}$. 

\textit{Lower bound:} Even though the noise is chosen adversarially (and deterministically), we first assume that the noise behaves as a random variable with distribution $\mathbb{U}(-\sqrt{3},\sqrt{3})$, and the initial state behaves as a Gaussian with variance $\sigma_0^2$ for some $\sigma_0^2>0$. We assume that the adversary declares this strategy in advance (which can only reduce the costs). From Theorem~\ref{thm:lowerbound}, if the first controller chooses an average power $P$, then the $MMSE$ at the second controller is lower bounded by

\vspace{-0.14in}

\begin{equation}
MMSE\geq \left( \left(  \sqrt{\kappa(P)} -\sqrt{P}   \right)^+ \right)^2.
\end{equation}
Since this lower bound holds for all $\sigma_0^2$, we let $\sigma_0^2\rightarrow\infty$, and obtain the following bound, \begin{eqnarray*}
&& MMSE\geq  \left(\left(\sqrt{\lim_{\sigma_0^2\rightarrow\infty}\kappa(P)} - \sqrt{P}\right)^+\right)^2\\
&=& \left( \left(  \sqrt{\frac{ 2^{2h(Z)}} {2\pi e }} -\sqrt{P}   \right)^+ \right)^2 \overset{(a)}{=}  \left( \left(  \sqrt{\frac{ 6} {\pi e }} -\sqrt{P}   \right)^+ \right)^2,
\end{eqnarray*}
where $(a)$ follows from the fact that $h(Z)=\lo{2\sqrt{3}}$ for $Z\sim \mathbb{U}(\sqrt{3},\sqrt{3})$.

A lower bound on the average costs (averaged over the initial state and noise realizations) for this problem is 
\begin{equation}
\label{eq:advlower}
\overline{J}_{opt}\geq \inf_{P\geq 0} k^2 P  + \left( \left(   \sqrt{\frac{ 6} {\pi e }} -\sqrt{P}   \right)^+ \right)^2.
\end{equation}
At this point, if the adversary is allowed to use randomized strategies, we already have a proof of the lower bound. But what if it is required to play deterministically? We invoke an argument inspired by the probabilistic method~\cite{ProbabilisticMethod} to address this requirement. For a fixed strategy $\gamma$, the lower bound in~\eqref{eq:advlower} holds on the cost $J^{(\gamma)}(\m{x}_0,\m{z})$  averaged over $\m{x}_0$ and $\m{z}$ for any choice of strategy $\gamma$. Thus, there exists a choice of realizations ${\m{x}_0}^{(\gamma)}$ and ${\m{z}}^{(\gamma)}$ such that the cost $J^{(\gamma)}(\m{x}_0,\m{z})$ is at least as large as what the lower bound says it must be on average if $\m{x}_0$ and $\m{z}$ were random. This cost is further lower bounded by the expression in~\eqref{eq:advlower}. This proves the lower bound.

\textit{Bounded ratios:} \textit{Case 1:}  $P^*<\frac{6}{4\pi e}$. In this case,
\begin{eqnarray*}
MMSE &\geq &\left(\sqrt{\frac{6}{\pi e}} -  \sqrt{P^*}\right)^2\\
&\geq & \left(\sqrt{\frac{6}{\pi e}} -  \frac{1}{2}\sqrt{\frac{6}{\pi e}}\right)^2=\frac{6}{4\pi e} = \frac{3}{2\pi e},
\end{eqnarray*}
which is also a lower bound on the cost. Thus the ratio of the zero-input upper bound (which is $3$) and this lower bound is smaller than $\frac{3\times 2\pi e}{3}=2\pi e$.

\textit{Case 2:} $P^* \geq \frac{6}{4\pi e}$.\\
In this case, the cost is no smaller than $k^2 P^*=k^2\frac{6}{4\pi e}$. Thus the ratio of quantization-based upper bound (which is $3k^2$) and this lower bound is smaller than $\frac{3k^2\times 2\pi e}{3k^2}=2\pi e$.

The ratio of the upper and lower bound is therefore always smaller than $2\pi e\approx 17.08$.

\textit{Nonlinear strategies can outperform linear by an arbitrary factor:} 
The costs attained by quantization-based strategies are bounded by $3k^2$, regardless of the adversary's strategy. This gives an upper bound on the cost of nonlinear strategies. For linear strategies, we want to provide a lower bound. As in the proof of constant factor optimality, assume that the noise behaves as $\mathbb{U}(-\sqrt{3},\sqrt{3})$, and the initial state behaves as $\mathcal{N}(0,\sigma_0^2)$. The average costs attained by any linear strategy are lower bounded by 
\begin{equation}
\label{eq:advlb}
\inf_P k^2P + \frac{\left(\left(\sigma_0 - \sqrt{P}\right)^+\right)^2}{\left(\left(\sigma_0 - \sqrt{P}\right)^+\right)^2+1}. 
\end{equation}
Again, using the probabilistic method, there exists a realization of initial state and noise that attains a lower bound no smaller than the average. This gives a lower bound of~\eqref{eq:advlb} on deterministic costs. The bounds on costs of linear and nonlinear strategies are the same as that in Section~\ref{sec:arbitrary}, with the substitution of $a$ by $\sqrt{3}$. The remaining proof is thus the same. 
\end{proof}

\vspace{-0.1in}

\section*{Acknowledgments}

\vspace{-0.05in}

We thank Gireeja Ranade for pointing out errors in an early version, and Kristen Woyach and the anonymous referees for their helpful comments. This research is supported by NSF grants CCF-0729122, CCF-0917212 and CNS-0932410.

\vspace{-0.05in}

\appendices{}
\section{An upper bound on the mutual information across the $X_1-Y_2$ channel}
\label{app:mutinfo}

\vspace{-0.2in}

\begin{eqnarray*}
&&I(\m{X}_1;\m{Y}_2)  =  h(\m{Y}_2) - h(\m{Y}_2|\m{X}_1)\\
&\leq & \sum_i h(Y_{2,i}) - h(\m{Y}_2|\m{X}_1)\\
& = & \sum_i \left(h(Y_{2,i}) - h(Y_{2,i}|X_{1,i})\right)= \sum_{i} I(X_{1,i};Y_{2,i})\\ 
& \overset{(a)}{=} & m I(X_1;Y_2|Q)= m \left(  h(Y_2|Q) - h(Y_2|X_1,Q)  \right)\\
& =& m \left(  h(Y_2|Q) - h(Y_2|X_1)  \right)\\
& \leq & m \left(  h(Y_2) - h(Y_2|X_1)  \right)\leq mI(X_1;Y_2),
\end{eqnarray*}
where in $(a)$, $X_1 = X_{1,i}$ if $Q=i$ (and $Y_2$ is defined similarly), and $Q$ is distributed uniformly on the discrete set $\{1,2,\ldots,m\}$. Because $Z$ is independent of $X_0$ and $U_1$, the variance of $Y_2=X_0+U_1+Z$ is maximized when $X_0$ (of power $\sigma_0^2$) and $U_1$ (of power $P$) are aligned, and it equals $(\sigma_0 + \sqrt{P})^2 + 1$. Thus, 
\begin{eqnarray}
\label{eq:mutinfo}
\nonumber I(X_1;Y_2) &=& h(Y_2) - h(Y_2|X_1)\\
\nonumber &=& h(Y_2) - h(Z)\\
\nonumber &\overset{(a)}{\leq} & \frac{1}{2}\lo{2\pi e \left((\sigma_0+\sqrt{P})^2 + 1\right)} - h(Z)\\
&=& \frac{1}{2}\lo{\frac{2\pi e \left( (\sigma_0+\sqrt{P})^2+1\right)}{2^{2h(Z)}}},
\end{eqnarray}
where $(a)$ follows from the observation that for given second moment of the random variable, the distribution that maximizes the differential entropy is Gaussian.

\bibliographystyle{IEEEtran}
\bibliography{IEEEabrv,MyMainBibliography}
\end{document}